\newcommand{\bis}{\mathrel{\mathchoice%
{\raisebox{.2ex}{$\,
  \underline{\makebox[.7em]{$\leftrightarrow$}}\,$}}%
{\raisebox{.2ex}{$\,
  \underline{\makebox[.7em]{$\leftrightarrow$}}\,$}}%
{\raisebox{.1ex}{$\,
  \underline{\makebox[.5em]{\scriptsize$\leftrightarrow$}}\,$}}%
{\raisebox{.1ex}{$\,
  \underline{\makebox[.5em]{\scriptsize$\leftrightarrow$}}\,$}}}}
\newcommand{\WAL}{\textsf{WAML}}
\newcommand{\PML}{\textsf{PML}}
\newcommand{\FOL}{\textsf{FOL}}
\newcommand{\M}{\mathcal{M}}
\newcommand{\F}{\mathcal{F}}
\newcommand{\N}{\mathcal{N}}
\newcommand{\la}{\langle}
\newcommand{\ra}{\rangle}
\newcommand{\lr}[1]{\la #1 \ra}
\newcommand{\pdia}{\Delta}
\newcommand{\pbox}{\nabla}
\renewcommand{\phi}{\varphi}
\newcommand{\AxWK}{\mathtt{K}}
\newcommand{\AxC}{\mathtt{C}}
\newcommand{\NecK}{\mathtt{N}}
\newcommand{\MonoK}{\mathtt{RM}}
\newcommand{\SK}{\mathbb{K}}
\newcommand{\SWK}{\mathbb{K}}
\newcommand{\Kv}{\mathsf{Kv}}
\renewcommand{\pbox}{}
\renewcommand{\pdia}{}
\newcommand{\pb}{\nabla}
\begin{document}
\title{Weakly Aggregative Modal Logic: \\ Characterization and Interpolation
%\thanks{Supported by organization x.}
}
%
%\titlerunning{Abbreviated paper title}
% If the paper title is too long for the running head, you can set
% an abbreviated paper title here
%
\author{Jixin Liu\inst{1}
%\orcidID{} 
\and
Yanjing Wang\inst{2}
%\orcidID{} 
\and
Yifeng Ding\inst{3}
%\orcidID{}
}
\authorrunning{Liu, Wang, Ding}
% First names are abbreviated in the running head.
% If there are more than two authors, 'et al.' is used.
%
\institute{Department of Philosophy, Peking University, China
%\email{ljx900228@163.com}
\and
Department of Philosophy, Peking University, China
%\email{wangyanjing@gmail.com}\\
%\url{http://www.phil.pku.edu.cn/personal/wangyj/index.html} 
\and
Group in Logic and the Methodology of Science, UC Berkeley, United States\\
%\email{yf.ding@berkeley.edu}\\
%\url{https://www.voidprove.com}
}
\maketitle              % typeset the header of the contribution
\begin{abstract}
Weakly Aggregative Modal Logic ($\WAL$) is a collection of disguised polyadic modal logics with n-ary modalities whose arguments are all the same. $\WAL$ has some interesting applications on epistemic logic and logic of games, so we study some basic model theoretical aspects of $\WAL$ in this paper. Specifically, we give a van Benthem-Rosen characterization theorem of $\WAL$ based on an intuitive notion of bisimulation and show that each basic $\WAL$ system $\SK_n$ lacks Craig Interpolation.

\keywords{weakly aggregative modal logic\and bisimulation \and van Benthem-Rosen Characterzation\and Craig Interpolation.}
\end{abstract}
\section{Introduction}
You are invited to a dinner party for married couples after a logic conference in China. The host tells you the following facts:
\begin{itemize}
\item At least one person of each couple is a logician,
\item At least one person of each couple is Chinese.
\end{itemize}
Given these two facts, can you infer that at least one person of each couple is a Chinese logician? The answer is clearly negative, since there might be a couple consisting of a foreign  logician and a Chinese spouse who is not a logician.

Now, suppose that the host adds another fact:
\begin{itemize}
\item At least one person of each couple likes spicy food.
\end{itemize}
What do you know now? Actually, you can infer that for each couple, one of the two people must be either: 
\begin{itemize}
\item a Chinese logician, or 
\item a logician who likes spicy food, or
\item a Chinese who likes spicy food. 
\end{itemize}
This can be verified by the \textit{Pigeonhole Principle}: for each couple, there are a logician, a Chinese, and a fan for spicy food, thus there must be at least one person of the couple  who has two of those three properties. This can clearly be generalized to $n$-tuples of things w.r.t. $n+1$ properties. %Even more generally, for $km+1$ properties distributed to $m$-tuples of things, for each $m$-tuple, at least one thing has $k+1$ properties. 

Now, going back to logic, if we express ``at least one person of each couple has property $\phi$'' by $\Box \phi$ then the above reasoning shows that the following is not valid: 
$$ \AxC: \Box p\land \Box q\to \Box (p\land q).$$ On the other hand, the following should be valid: 
$$\AxWK_2:  \Box p\land \Box q\land \Box r\to \Box ((p\land q)\lor (p\land r)\lor (q\land r)).$$
In general, if $\Box \phi$ expresses ``at least one thing of each (relevant) $n$-tuple of things has property $\phi$'' then the following is intuitively valid: 
$$\AxWK_n: \Box p_{0}\wedge\cdots\land \Box p_{n}\rightarrow\Box\bigvee\limits_{(0\leq i<j\leq n)}(p_{i}\wedge p_{j}).$$
Note that $\AxWK_1$ is just $\AxC$, which is a theorem in the weakest normal modal logic $\SK$. $\AxC$ is sometimes called the \textit{Closure of Conjunction} \cite{Chellas80}, or \textit{Aggregative Axiom} \cite{jennings1981some}, or \textit{Adjunctive Axiom} \cite{Costa2005}. Clearly, when $n\geq 2$, $\AxWK_n$ are weaker versions of $\AxC$. The resulting logics departing from the basic normal modal logics by using weaker aggregative axioms $\AxWK_n$ instead of $\AxC$ are called \textit{Weakly Aggregative Modal Logics} (\WAL) \cite{schotch1980modal}. There are various readings of $\Box p$ under which it is intuitive to reject $\AxC$ besides the one we mentioned in our motivating party story. For example, if we read $\Box p$ as ``$p$ is obligatory'' as in deontic logic, then $\AxC$ is not that reasonable since one may easily face two conflicting obligations without having any single contradictory  obligation \cite{schotch1980modal}. As another example, in epistemic logic of knowing how \cite{Wang17,Kh17}, if $\Box p$ expresses ``knowing how to achieve $p$'', then it is reasonable to make $\AxC$ invalid: you may know how to get drunk and know how to prove a deep theorem without knowing how to prove it when drunk. 

Coming back to our setting where $\AxWK_n$ are valid, the readings of $\Box \phi$ in those axioms may sound complicated, but they are actually grounded in a more general picture of \textit{Polyadic Modal Logic ($\PML$)} which studies the logics with $n$-ary modalities. Polyadic modalities arose naturally in the literature of philosophical logic, particularly for the binary ones, such as the \textit{until} modality in temporal logic \cite{Kamp1968}, instantial operators in games-related neighborhood modal logics \cite{van2017instantial}, relativized knowledge operators in epistemic logic \cite{LCC,WF14}, and the conditional operators in the logics of conditionals \cite{Beall2012}. Following the notation in \cite{blackburn2002modal}, we use $\nabla$ for the $n$-ary generalization of the $\Box$ modality when $n>1$.\footnote{This is not to be confused with the non-contingency operator, which is also denoted as $\nabla$ in non-contingency or knowing whether logics \cite{FWvD15}.} The semantics of $\nabla(\phi_1, \dots, \phi_n)$ is based on Kripke models with $n+1$-ary relations $R$ \cite{jennings1981some,blackburn2002modal}: 
%\noteYW{Who came up with the semantics? Jennings?} 

\begin{quote}$\nabla(\phi_1, \dots, \phi_n)$ holds at $s$ iff \textit{for all} $s_1, ..., s_n$ such that $Rss_1\dots s_n$ \textit{there exists} some $i\in[1,n]$ such that $\phi_i$ holds at $s_i$.
\end{quote}
Note that the quantifier alternation pattern $\forall\exists$ in the above (informal) semantics for $\nabla$. Actually, the reading we mentioned for $\Box\phi$ in our motivating story is simply the semantics for $\nabla(\phi_1, \dots, \phi_n)$ where $\phi_1=\dots=\phi_n$. Essentially, the formulas $\Box\phi $ under the new reading can be viewed as special cases of the modal formulas in polyadic modal languages. Due to the fact that the arguments are the same in $\nabla(\phi, \dots, \phi)$, we can also call the $\Box$ under the new reading the \textit{diagonal $n$-modalities}.\footnote{Name mentioned by Yde Venema via personal communications.} In this light, we may call the new semantics for $\Box\phi$ the \textit{diagonal $n$-semantics} (given frames with $n+1$-ary relations).  

Diagonal modalities also arise in other settings in disguise. For example, in epistemic logic of \textit{knowing value} \cite{GW16}, the formula $\Kv (\phi, c)$ says that the agent knows the value of $c$ given $\phi$, which semantically amounts to that for all the pairs of $\phi$ worlds that the agent cannot distinguish from the actual worlds, $c$ has the same value. In other words, in \textit{every} pair of the indistinguishable worlds where $c$ has \textit{different} values, \textit{there is} a $\neg \phi$ world, which can be expressed by $\Box^c \neg \phi$ with the diagonal 2-modality ($\Box^c$) based on intuitive ternary relations (see details in \cite{GW16}). As another example in epistemic logic, \cite{RAK} proposed a local reasoning operator based on models where each agent on each world may have different \textit{frames of mind} (sets of indistinguishable worlds). One agent believes $\phi$ then means that in \textit{one} of his current frame of mind, $\phi$ is true \textit{everywhere}. This belief  modality can also be viewed as the dual of a diagonal 2-modality (noticing the quantifier alternation $\exists \forall$ in the informal semantics). 

Yet another important reason to study diagonal modalities comes from the connection with paraconsistent reasoning established by Schotch and Jennings \cite{schotch1980modal}. In a nutshell, \cite{schotch1980modal} introduces a notion of \textit{$n$-forcing} where a set of formulas $\Gamma$ $n$-forces $\phi$ ($\Gamma\vdash_n\phi$) if for each $n$-partition of $\Gamma$ there is a cell $\Delta$ such that $\phi$ follows from $\Delta$ classically w.r.t.\ some given logic ($\Gamma\vdash \phi$). This leads to a notion of \textit{$n$-coherence} relaxing the notion of consistency:  $\Gamma\nvdash_n \bot$ ($\Gamma$ is $n$-coherent) iff there exists an $n$-partition of $\Gamma$ such that all the cells are classically consistent. These notions led the authors of \cite{schotch1980modal} to the discovery of the diagonal semantics for $\Box$ based on frames with $n+1$-ary relations, by requiring $\Box(u)=\{\phi\mid u\vDash \Box \phi\}$ to be an $n$-theory based on the closure over $n$-forcing, under some other minor conditions. Since the derivation relation of basic normal modal logic $\SK$ can be characterized by a proof system extending the propositional one with the rule $\Gamma\vdash \phi \slash \Box(\Gamma)\vdash \Box\phi$ where $\Box(\Gamma)=\{\Box\phi\mid \phi\in\Gamma\}$, it is interesting to ask whether adding $\Gamma\vdash_{n} \phi \slash \Box(\Gamma)\vdash \Box \phi$ characterizes exactly the valid consequences for modal logic under the diagonal semantics based on frames with $n$-ary relations. Apostoli and Brown answered this question positively in \cite{apostoli1995solution} 15 years later, and they characterize $\vdash_n$ by a Gentzen-style sequent calculus based on the compactness of $\vdash_n$ proved by using a compact result for coloring hypergraphs.\footnote{Other connections between \WAL\ and graph coloring problems can be found in \cite{NA02} where the four-color problem is coded by the validity of some formulas in the \WAL\ language.} Moreover, they show that the \WAL\ proof systems with $\AxWK_n$ are also complete w.r.t. the class of all frames with $n$-ary relations respectively. The latter proof is then simplified in \cite{nicholson2000revisiting} without using the graph theoretical compactness result. This completeness result is further generalized to the extensions of \WAL\ with extra one-degree axioms in \cite{apostoli1997completeness}. The computational complexity issues of such logics are discussed in \cite{allen2005complexity}, and this concludes our relatively long introduction to \WAL, which might not be that well-known to many modal logicians.  
%\noteYW{We can probably say sth about the use of the language in stating colorability in graph theory.}

In this paper, we continue the line of work on \WAL\ by looking at the model theoretical aspects. In particular, we mainly focus on the following two questions: 
\begin{itemize}
\item How to characterize the expressive power of \WAL\ structurally within first-order logic over (finite) pointed models? 
\item Whether \WAL\ has Craig Interpolation?
\end{itemize}
For the first question, we propose a notion of bisimulation to characterize \WAL\ within the corresponding first-order logic. The answer for the second question is negative, and we will provide counterexamples in this paper to show \WAL\ do not have Craig Interpolation.

In the rest of the paper, we lay out the basics of \WAL\ in Section \ref{sec.pre}, prove the characterization theorem based on a bisimulation notion in Section \ref{sec.bis}, and give counterexamples for the interpolation theorem in Section \ref{sec.interp} 
% and also give some partial results about interpolation in section \ref{sec.interp} 
before concluding with future work in \ref{sec.con}. 

\section{Preliminaries}
\label{sec.pre}

In this section we review some basic definitions and results in the literature.

\subsection{Weakly Aggregative Modal Logic}
The language for $\WAL$ is the same as the language for basic (monadic) modal logic.
\begin{definition}
Given a set of propositional letters $\Phi$ and a single unary modality $\Box$, the language of $\WAL$ is defined by:
$$\phi:=p\mid\lnot\phi\mid(\phi\wedge\phi)\mid\Box\phi$$
where $p\in\Phi$. We define $\top$, $\phi\lor\psi$, $\phi\to\psi$, and $\Diamond \phi$ as usual. 
\end{definition}

However, given $n$, $\WAL$ can be viewed as a fragment of polyadic modal logic with a $n$-ary modality, since $\Box\phi$ is essentially $\pb(\phi, \ldots, \phi)$. \textbf{Notation:} in the sequel, we use $\WAL^n$, where $n>1$, to denote the logical framework with the semantics based on $n$-models defined below:
\begin{definition}[$n$-Semantics]
An $n$-frame is a pair $\lr{W, R_{\pbox}}$ where $W$ is an nonempty set and $R_{\pbox}$ is an $n+1$-ary relation over $W$. A $n$-model $\M$ is a pair $\lr{\F, V}$ where the valuation function $V$ assigns each $w\in W$ a subset of $\Phi$. We say $\M$ is an \textit{image-finite} model if there are only finitely many $n$-ary successors of each point. The semantics for $\Box \phi$ (and $\Diamond \phi$) is defined by: 
\begin{small}$$\begin{array}{|lll|}
\hline
\M,w\models\Box \phi  & \text{iff} & \text{ for all $v_{1},\ldots v_{n}\in W$ with  $R_{\pbox}wv_{1}\ldots,v_{n}$}, \M,v_{i}\models\phi \text{ for some $i\leq n$}.\\
 \M,w\models\Diamond \phi  & \text{iff} & \text{ there are $v_{1},\ldots v_{n}\in W$ st. $R_{\pbox}wv_{1}\ldots,v_{n}$ and } \M,v_{i}\models\phi \text{ for all $i\leq n$}.\\
 \hline
\end{array}
$$
\end{small}
\end{definition}
According to the above semantics, it is not hard to see that the aggregation axiom $\Box\phi\land \Box\psi\to \Box(\phi\land \psi)$ in basic normal modal logic is not valid on $n$-frames for any $n > 1$. 
%For example, $\Box p\land \Box q\to \Box(p\land q)$ does not hold at $w$ in the following 2-model where the triangle denote the ternary relation: 
%$$\xymatrix@R-20pt@C+10pt{ 
%                         & v:p\ar@{-}[dd]\\
%w\ar@{-}[ru]\ar@{-}[rd]&\\
 %                       & v:q
 %}$$

% Instead of the aggregation axiom, we have some weaker ones which are valid. For example, given $k=2$, $(\Box p\land \Box q\land \Box r) \to \Box((p\land q)\lor(p\land r)\lor (q\land r))$ is valid over $2$-frames which can be proved by a simple \textit{pigeon hole} argument. 
\cite{schotch1980modal} proposed the following proof systems $\SK_n$ for each $n$. 
\begin{definition}[Weakly aggregative modal logic]
The logic $\SK_n$ is a modal logic including propositional tautologies, the axiom $\AxWK_n$ and closed under the rules $\NecK$ and $\MonoK$: 
$$
\begin{array}[c]{cc}
\AxWK_n & \Box p_{0}\wedge\cdots\land \Box p_{n}\rightarrow\Box
\bigvee\limits_{(0\leq i<j\leq n)}(p_{i}\wedge p_{j})\\
\NecK &\vdash\phi  \implies  \vdash\Box\phi\\
\MonoK     & \vdash\phi\rightarrow\psi \implies \vdash\Box\phi
\rightarrow\Box\psi\\
%\RS & \vdash \phi (p) \implies \vdash \phi[\psi\slash p] 
\end{array}
$$
\end{definition}
It is clear that $\AxWK_1$ is just the aggregation axiom $\AxC$ and thus $\SWK_1$ is just the normal monadic modal logic $\SK$. It can also be shown easily that for each $n > m$, $\SK_n$ is strictly weaker than $\SK_m$. In fact, many familiar equivalences in normal modal logics, like the equivalence between $\Diamond \top$ and $ \Box p \to \Diamond p$, no longer hold in $\SK_n$ for $n > 1$. Semantically speaking, while $\Box p \to \Diamond p$'s validity corresponds to seriality on $1$-frames (usual Kripke frames), its correspondence on $2$-frames is not even elementary ($\Diamond \top$ still corresponds to each point having at least a successor tuple). 

%Note that, to make sure $\AxC$ is valid over 3-frames, we need extra frame condition $\forall x\forall y \forall z(Rxyz \to (y=z \lor Rxyy\lor Rxzz))$ \cite{jennings1984preservation}. Also note that the diagonal modalities can distinguish equivalent formulas in the standard modal logic setting, e.g., $\neg \Box \bot$ is no longer equivalent to $\Box p\to \Diamond p$ any more w.r.t. the diagonal semantics. In fact, the property corresponding to $\Box p\to \Diamond p$ w.r.t. the validity of diagonal semantics is not first-order definable. 

After being open for more than a decade, the completeness for $\SK_{n}$ over $n$-models was finally proved in \cite{apostoli1995solution} and \cite{apostoli1997completeness}, by reducing to the $n$-forcing relation proposed in \cite{schotch1980modal}. In \cite{nicholson2000revisiting}, a more direct completeness proof is given using some non-trivial combinatorial analysis to derive a crucial theorem of $\SK_n$. 

\section{Characterization via bisimulation}
\label{sec.bis}
In this section, we introduce a notion of bisimulation for $\WAL$ and prove the van-Benthem-Rosen Characteristic Theorem for \WAL.\footnote{We have another proof for the Characterization theorem over arbitrary $n$-models, using tailored notions of saturation and ultrafilter extension for $\WAL^n$, due to the space limit we only present the proof which also works for finite models.}
%Throughout this section we fix an $n$ for logic of $\WAL$ over $n$-models. \noteYW{Mention alternative proof method.}
%\noteYW{What about restricting to models with certain properties and use the standard bisimulation. }

\begin{definition}
[$wa^n$-bisimulation] Let $\M=(W,R_{\pbox},V)$ and $\M'=(W',R_{\pbox}',V')$ be two $n$-models. A non-empty binary relation $Z\subseteq W\times W'$ is called a \emph{$wa^n$-bisimulation} between $\M$ and $\M'$ if the following conditions are satisfied:
\begin{itemize}
\item[inv] If $wZw'$, then $w$ and $w'$ satisfy the same propositional letters (in $\Phi$).
\item[forth] If $wZw'$ and $R_{\pbox}wv_{1},\ldots,v_{n}$ then there
are $v_{1}',\ldots,v_{n}'$ in $W'$ s.t. $R'_{\pbox}w'v_{1}',\ldots,v_{n}'$ and for each $v'_j$ there is a $v_i$ such that $v_i Z v'_j$ where $1\leq i,j\leq n$.
\item[back] If $wZw'$ and $R'_{\pbox}w'v_{1}',\ldots,v_{n}'$ then there are $v_{1},\ldots,v_{n}$ in $W$ s.t.\ $R_{\pbox}wv_{1},\ldots,v_{n}$ and for each $v_i$ there is a $v'_j$ such that $v_i Z v'_j$ where $1\leq i,j\leq n$.
\end{itemize}
When $Z$ is a bisimulation linking two states $w$ in $\M$ and $w'$
in $\M'$ we say that $w$ and $w'$ are $\Phi$-$wa^n$-\textit{bisimilar} $(\M,w \bis^n \M',w')$.
%If there is some bisimulation between $\M$ and $\M'$, we write $\M \bis^n \M'$, saying $\M$ and $\M'$ are bisimilar.
\end{definition}
\begin{remark}Pay attention to the two subtleties in the above definition: $i, j$ in the forth and back conditions are not necessarily the same, thus we may not have an aligned correspondence of each $v_i$ and $v_i'$; in the second part of the forth condition, we require each $v_j'$ to have a corresponding $v_i$, not the other way around. Similar in the back condition. This reflects the quantifier alternation in the semantics of $\Box$ in $\WAL^n$. 
\end{remark}
% From the definition above we can see that if $w$ and $w'$ are modal bisimilar, they are clearly \WAL-bisimilar. Since we know that \WAL\ is a fragment of polyadic modal logic, we will show that it is exactly the fragment closed under the bisimulation above. First we show the bisimulation is indeed sound w.r.t. the \WAL-equivalence.
\begin{example} Consider the following two $2$-models where $\{\lr{w,w_1,w_2}, \lr{w,w_2,w_3}\}$ is the ternary relation in the left model, and $\{\lr{v,v_1,v_2}\}$ is the ternary relation in the right model. \vspace{-5pt}
$$\xymatrix@C+20pt@R-15pt{
 &w_1:p\ar@{-}[d] &     & v_1:p\ar@{-}[d] \\
w:p\ar@{-}[ru]\ar@{-}[rd]\ar@{-}[r]&w_2:p\ar@{-}[d] &  v:p\ar@{-}[ru]\ar@{-}[r]  & v_2:p \\
 & w_3      & 
}$$
$Z=\{\lr{w,v}, \lr{w_1, v_1}, \lr{w_2, v_2}, \lr{w_2, v_1}\}$ is a $wa^2$-bisimulation. A polyadic modal formula $\neg\nabla\neg(p, \neg p)$, not expressible in $\WAL^2$, can distinguish $w$ and $v$.  
\end{example}

It is easy to verify that $\bis^n$ is indeed an equivalence relation and we show $\WAL^n$ is invariant under it. 
\begin{proposition}
\label{prop:bisim-invariance}
Let $\M=(W,R_{\pbox},V)$ and $\M'=(W',R_{\pdia}',V')$ be two $n$-models. Then for every $w\in W$ and $w'\in W'$, $w \bis^n w'$ implies $w\equiv_{\WAL^n} w'$. In words, $\WAL^n$ formulas are invariant under $wa^n$-bisimulation.
\end{proposition}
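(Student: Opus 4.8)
The plan is to prove the statement by induction on the structure of the $\WAL^n$ formula $\phi$: we show that for any $wa^n$-bisimulation $Z$ between $\M$ and $\M'$ and any $w\in W$, $w'\in W'$ with $wZw'$, we have $\M,w\models\phi \iff \M',w'\models\phi$. The atomic case $\phi=p$ is exactly the (inv) clause. The Boolean cases $\phi=\neg\psi$ and $\phi=\psi_1\wedge\psi_2$ are routine applications of the induction hypothesis. For the modal case it is convenient to first note (as the excerpt already remarks, and as is straightforward to verify) that $\bis^n$ is symmetric — if $Z$ is a $wa^n$-bisimulation then so is $Z^{-1}$, with (forth) and (back) swapping roles — so it suffices to establish just one direction of the biconditional for $\phi=\Box\psi$.

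So suppose $wZw'$ and $\M',w'\models\Box\psi$; we show $\M,w\models\Box\psi$. Let $v_1,\dots,v_n\in W$ be arbitrary with $R_\pbox w v_1\dots v_n$. By the (forth) clause there are $v_1',\dots,v_n'\in W'$ with $R'_\pbox w' v_1'\dots v_n'$ such that for every $j\leq n$ there is some $i\leq n$ with $v_iZv_j'$. Since $\M',w'\models\Box\psi$, the $n$-semantics of $\Box$ yields some $j_0\leq n$ with $\M',v_{j_0}'\models\psi$. Choose $i_0\leq n$ with $v_{i_0}Zv_{j_0}'$; by the induction hypothesis $\M,v_{i_0}\models\psi$. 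As the tuple $v_1,\dots,v_n$ was an arbitrary $R_\pbox$-successor tuple of $w$, we conclude $\M,w\models\Box\psi$. The opposite direction is identical with the roles of $\M,\M'$ swapped, using (back) in place of (forth) (or, formally, by applying the case just proved to the $wa^n$-bisimulation $Z^{-1}$).

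The one point that genuinely needs care — and the reason the Remark after the definition of $wa^n$-bisimulation flags it — is the \emph{direction} of the matching in (forth)/(back): it is ``every $v_j'$ has a $Z$-preimage among the $v_i$'', not the reverse. This is precisely what the $\forall\exists$ quantifier pattern in the semantics of $\Box$ demands: a witness for $\Box\psi$ failing on the $\M'$ side would be a \emph{single} successor $v_{j_0}'$ falsifying $\psi$, and we must transport exactly such a single world back to the $\M$ side, which (forth) guarantees. (One could alternatively run the modal step through the derived modality $\Diamond$, whose semantics has the dual $\exists\forall$ shape, to make this correspondence transparent, but this is not needed since $\Diamond$ is a defined connective.) I do not expect any real obstacle beyond getting this alignment right; there is no need for image-finiteness, saturation, or any of the machinery used later for the full characterization theorem.
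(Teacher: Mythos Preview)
Your proof is correct and follows essentially the same approach as the paper's: structural induction on formulas, with the modal step handled via the (forth)/(back) clauses and the crucial observation about the direction of the matching. The only cosmetic difference is that the paper runs the modal case through $\Diamond$ (the $\exists\forall$ form) rather than $\Box$, which is precisely the alternative you mention in your parenthetical remark.
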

\begin{proof}
We consider only the modality case. Suppose that $w \bis^n w'$
and $w\models\Diamond\phi$. Then there are $v_{1},\ldots,v_{n}$
s.t. $R_{\pbox}wv_{1},\ldots,v_{n}$, and each $v_{i}\models$ $\phi$. By
the forth condition, there are $v_{1}',\ldots,v_{n}'$ in
$W'$ s.t. $R_{\pbox}w'v_{1}',\ldots
,v_{n}'$ and and for each $v'_j$ there is a $v_i$ such that $v_i Z v'_j$.
From the I.H. we have each $v_{i}'\models\phi$. As a
result, $w'\models\Diamond\phi$. For the converse direction just
use the back condition.
\end{proof}

\begin{theorem}
[Hennessy-Milner Theorem for $\WAL^n$]Let $\M=(W,R_{\pbox},V)$ and $\M'=(W',R_{\pbox}',V')$ be two image-finite $n$-models. Then for every $w\in W$ and $w'\in W'$,
$w \bis^n w'$ iff $w\equiv_{\WAL^n} w'$.
\end{theorem}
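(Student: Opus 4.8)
The left-to-right implication is exactly Proposition~\ref{prop:bisim-invariance}, so the plan is to prove the converse: on image-finite $n$-models, $w\equiv_{\WAL^n}w'$ implies $w\bis^n w'$. Following the classical Hennessy--Milner strategy, I would take $Z\subseteq W\times W'$ to be $\WAL^n$-equivalence itself, i.e.\ $u\,Z\,u'$ iff $u\equiv_{\WAL^n}u'$, and show that $Z$ is a $wa^n$-bisimulation; since $w\equiv_{\WAL^n}w'$ yields $w\,Z\,w'$, this $Z$ links $w$ and $w'$ and the theorem follows. The \textbf{inv} clause holds because propositional letters are $\WAL^n$-formulas, and since $\equiv_{\WAL^n}$ is symmetric, \textbf{back} is just \textbf{forth} applied to the pair $(\M',\M)$ (now using image-finiteness of $\M$ in place of that of $\M'$). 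So everything reduces to verifying \textbf{forth}.

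For \textbf{forth}, suppose $w\,Z\,w'$ and $R_{\pbox}wv_1\dots v_n$. Let $S:=\{u\in W'\mid u\equiv_{\WAL^n}v_i\text{ for some }i\le n\}$ be the ``good'' states of $\M'$, and let $B$ be the set of states that occur as a component of some $R'_{\pbox}$-successor tuple of $w'$ but do not lie in $S$; image-finiteness of $\M'$ makes $B$ finite. For each $b\in B$ and each $i\le n$ we have $b\not\equiv_{\WAL^n}v_i$, so (negating the witnessing formula if it points the wrong way) there is a formula $\theta_{b,i}$ with $v_i\models\theta_{b,i}$ and $b\not\models\theta_{b,i}$. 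Put $\Theta:=\bigwedge_{b\in B}\bigvee_{i\le n}\theta_{b,i}$, a finite conjunction of finite disjunctions, so that every $v_i$ satisfies $\Theta$ whereas no $b\in B$ does. Then $\M,w\models\Diamond\Theta$, hence $\M',w'\models\Diamond\Theta$ by $w\equiv_{\WAL^n}w'$, which produces an $R'_{\pbox}$-successor tuple $v'_1\dots v'_n$ of $w'$ all of whose components satisfy $\Theta$; such components avoid $B$ and therefore lie in $S$, so each $v'_j$ is $\WAL^n$-equivalent to some $v_i$, i.e.\ $v_i\,Z\,v'_j$, which is precisely what \textbf{forth} demands. (When $B=\emptyset$, $\Theta=\top$ and the same argument, via $\M,w\models\Diamond\top$, delivers a successor tuple of $w'$ whose components all lie in $S$.)

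The delicate point, and the only genuine departure from the monadic proof, is the shape of $\Theta$: because of the $\forall\exists$ alternation in the $\WAL^n$-semantics a $\Diamond$-formula constrains \emph{all} components of the chosen successor tuple simultaneously, so one cannot simply exhibit a single modally equivalent witness as in the monadic case but must instead diagonalize at once over every successor tuple of $w'$. This forces the inner disjunction $\bigvee_{i\le n}\theta_{b,i}$ (the formula must hold at each $v_i$, which may genuinely differ from one another) together with the outer conjunction over the finite set $B$ — and it is exactly this outer conjunction that makes image-finiteness indispensable. Everything else is the routine Hennessy--Milner bookkeeping.
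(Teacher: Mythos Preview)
Your argument is correct and follows essentially the same route as the paper: both take $\equiv_{\WAL^n}$ itself as the candidate bisimulation and, using image-finiteness, build a single $\Diamond$-formula that is satisfied by every $v_i$ but by none of the finitely many ``bad'' successor-components of $w'$. The only cosmetic differences are that the paper argues by contradiction and uses the DNF-shaped inner formula $\bigvee_{i}\bigwedge_{k}\phi^i_k$, whereas you proceed directly with the (weaker, but still sufficient) CNF-shaped $\bigwedge_{b}\bigvee_{i}\theta_{b,i}$.
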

\begin{proof} As in basic modal logic, the crucial part is to show $\equiv_{\WAL^n}$ is indeed a $wa^n$-bisimulation and we only verify the forth condition. Suppose towards contradiction that $Rwv_1\dots v_n$ but for each $v_1'\dots v'_n$ such that $R'w'v_1'\dots v'_n$ there is a $v'_j$ such that it is not $\WAL^n$-equivalent to any of $v_i$. In image-finite models we can list such $v'_j$ as $u_1\dots u_m$. Now for each $u_k$ and $v_i$ we have $\phi_k^i$ which holds on $v_i$ but not on $u_k$. Now we consider the formula $\psi=\Diamond (\bigvee_{1\leq i\leq n}\bigwedge_{1\leq k\leq m} \phi_k^i)$. It is not hard to see that $\psi$ holds on $w$ but not $w'$, hence contradiction.
\end{proof}
%\noteYW{Do we really use it?}
Like in normal modal logic, we can also define a notion of $k$-bisimulation of $\WAL^n$, by restricting the maximal depth we may go to.
\begin{definition}
[$k$-$wa^n$-bisimulation] Let $\M=(W,R_{\pbox},V)$ and $\M'=(W',R_{\pbox}',V')$ be two $n$-models. $w$ and $w'$ are $0$-$wa^n$-bisimilar ($w\bis^n_{0} w'$) iff $V(v)=V'(v')$. $w\bis^n_{k+1} w'$ iff $w\bis^n_{k} w'$ and the follow two conditions are satisfied: 
\begin{itemize}
\item[forth] If $v\bis^n_{k+1}v'$ and $R_{\pbox}vv_{1},\ldots,v_{n}$ then there
are $v_{1}',\ldots,v_{n}'$ in $W'$ s.t. $R'_{\pbox}v'v_{1}',\ldots,v_{n}'$ and for each $v'_j$ there is a $v_i$ such that $v_i \bis^n_k v'_j$ where $1\leq i,j\leq n$.
\item[back] If $v\bis^n_{k+1}v'$ and $R'_{\pbox}v'v_{1}',\ldots,v_{n}'$ then there are $v_{1},\ldots,v_{n}$ in $W$ s.t. $R_{\pbox}vv_{1},\ldots,v_{n}$ and for each $v_i$ there is a $v'_j$ such that $v_i \bis^n_k v'_j$ where $1\leq i,j\leq n$.
\end{itemize}
\end{definition}
% Given finitely many propositional letters, there are essentially only finitely many logically different formulas in $\WAL^n_k$. By a simple induction proof we can show: 
% \begin{proposition}
% Suppose there are only finitely propositional letters, then given two $n$-models $\M=(W,R_{\pbox},V)$ and $\M'=(W',R_{\pbox}',V')$ the following are equivalent.
% \begin{itemize}
% \item[i] $w\bis^n_{k} w'$
% \item[ii] $w$ and $w'$ agree on all $\WAL^n$ formulas of degree at most $k$.
% \end{itemize}
% \end{proposition}
% \begin{proof}
% By an induction on k with a similar strategy for proving the Hennessy-Milner theorem.
% \end{proof}
We can translate each $\WAL^n$ formula to an equivalent $\FOL$ formula with one free variable and $n+1$-ary relation symbols, thus $\WAL^n$ is also compact. 
\begin{definition}[Standard translation]
$\vspace{-5pt}
ST:\WAL^n\to \FOL$:
\[\begin{array}{lcl}
ST_x(p)&=&Px\\ 
ST_x(\neg\phi)&=& \neg ST_x(\phi)\\ 
ST_x(\phi\land\psi)&=& ST_x(\phi)\land ST_x(\psi) \\  
ST_x(\Box \phi)&=& \forall y_1\forall y_2\dots \forall y_n (R xy_1y_2\dots y_n\to ST_{y_1}(\phi)\lor \dots\lor ST_{y_n}(\phi))
\end{array}\]
\vspace{-10pt}
\end{definition}
%It is obvious that we have a standard translation from \WAL \ to $\PML$. We can just translate $\Diamond\phi$ into $\pdia(\phi,\ldots,\phi)$, where the number of arguments is depended on which polyadic language we use. 
By following a similar strategy as in \cite{otto2004elementary}, we will show a van Benthem-Rosen characterization theorem for $\WAL^n$: a $\FOL$ formula is equivalent to the translation of a $\WAL^n$ formula (over finite $n$-models) if and only if it is invariant under $wa^n$-bisimulations (over finite $n$-models).

First we need to define a notion of \textit{unraveling} w.r.t.\ $n$-ary models as we did for models with binary relations.   We use an example of a graph with ternary relations to illustrate the intuitive idea behind the general $n$-ary unraveling, which is first introduced in \cite{de1993extending}.

\begin{example}\label{ex.unr}
Given the $2$-model with ternary relations $\lr{\{w,v,u,t\},$ $\{\lr{w, u,t},$ $\lr{u,t,u},$ $\lr{t,w,v} \}, V}$. It is quite intuitive to first unravel it into a tree with pairs of states as nodes, illustrated below: \vspace{-10pt}
$$\xymatrix@R-14pt@C+12pt{
& &w\ar[d]\ar[dl] &\\
 &\lr{\underline{u},t}\ar[ld]\ar[d] &\lr{u,\underline{t}}\ar[d]\ar[dr] &\\
 \lr{\underline{t}, u}\ar@{.}[d]&\lr{t,\underline{u}} \ar@{.}[d]&\lr{\underline{w},v} \ar@{.}[d]& \lr{w,\underline{v}}\\
 &&&
} \vspace{-10pt}$$
To turn it into a 2-model, we need to define the new ternary relations. For each triple $\lr{s_0,s_1,s_2}$ of pairs, $\lr{s_0, s_1, s_2}$ is in the new ternary relation iff $s_1$ and $s_2$ are successors of $s_0$ in the above graph and the triple of underlined worlds in $s_0, s_1, s_2$ respectively is in the original ternary relation, e.g., $\lr{u,\underline{t}},\lr{\underline{w},v},\lr{w,\underline{v}}$ is in the new ternary relation since $\lr{t,w,v}$ is in the original ternary relation. 
\end{example}
In general, we can use the $n$-tuples of the states in the original model together with a natural number $k\in [1,n]$ as the basic building blocks for the unraveling of an $n$-model, e.g., $\lr{w, v, u, 2}$ means the second the state is the underlined one. To make the definition uniform, we define the root as the sequence $\lr{w,\dots,w,1}.$ Like the unraveling for a binary graph, formally we will use sequences of such building blocks as the nodes in the unraveling of a $n$-model, e.g., the left-most node $\lr{\underline{t},u}$ in the above example will become $\lr{\lr{w,w,1},\lr{u,t,1},\lr{t, u ,1}}$. This leads to the following definition. 
\begin{definition}
Given an $n$-model $\M=\langle W,R,V\rangle$ and 
$w\in W$, we first define the binary unraveling $\M^b_{w}$ of $\M$ around $w$ as $\lr{W_w, R^b, V'}$ where: 
\begin{itemize}
\item $W_w$ is the set of sequences $\langle\langle\vec{v}_0,i_0\rangle,\langle\vec{v}_{1},i_{1}\rangle,\ldots,\langle\vec{v}_{m},i_{m}\rangle\rangle$  where:
\begin{itemize}
\item $m\in\mathbb{N}$;
\item for each $j\in [0,m]$, $\vec
{v}_{j}\in W^n$ and $i_{j}\in
[1, n]$ such that $R(\vec{v}_{j}[i_{j}])\vec{v}_{j+1}$;
\item $\vec{v}_0$ is the constant $n$-sequence $\langle w,\dots ,w\rangle $ and $i_0=1$;
\end{itemize}
\item $R^bss'$ iff $s'$ extends $s$ with some $\lr{\vec{v}, i}$
\item $V'(s)=V(r(s))$, where $r(s)= \vec{v}_m[i_m]$ if $s=\lr{\dots, \lr{\vec{v}_m,i_m}}$.
\end{itemize}
The unraveling $\M_w=\lr{W_w, R', V'}$ is based on $ \M^b_w$ by defining $R's_0 s_1\dots s_n $ iff $Rr(s_0)r(s_1)\dots r(s_n)$ and $R^bs_0s_i$ for all $i\in [1,n]$.
% \item $R's_0 s_1\dots s_n $ iff for each $i\in [1,n]$, $s_i$ extends $s_0$ with some $\lr{\vec{v}, i}$, and $Rr(s_0)r(s_1)\dots r(s_n)$ where $r(s)= \vec{v}_n[i_n]$ if $s=\lr{\dots, \lr{\vec{v}_n,i_n}}$. 
%We refer to the frame $\lr{W_w,R'}$ of the unraveling as  $\F_w$.
Let the bounded unraveling $\M_w|_l$ be the submodel of $\M_w$ up to level $l$.  
% \bigskip For each sequence $s$, let $l(s)$ be the lenth of $s$ and define
% $t(s)=s[l(s)-1]$.
% Let $r(s)=(t(s)[0])[t(s)[1]]$, which means $s$ represents a member $r(s)$ in
% $W$.
% For $s\in W^{w}$, we call $s$ a generated sequence from $w$ to $r(s)$.
% We define a relation $R^{\ast}$ between sequences: $aR^{\ast}b$ iff $b$ is a
% extension of $a$ and $l(b)=l(a)+1$;
% Now we give the definition of $R^{w}$.
% \bigskip
% $R^{w}:=\{(s,s_{1},\ldots,s_{n})\mid$ $sR^{\ast}s_{i}$ for each $i$,
% and $r(s)Rr(s_{1})\ldots r(s_{n})\}$;
% \bigskip
% $F^{w}:=\langle W^{w},R^{w}\rangle$.
\end{definition}
\begin{remark}\label{rem.tree}
Clearly $\M^b_w$ is a tree, and in $\M_w$, if $Rs_0\dots s_n$ then $s_1\dots s_n$ are at the next ``level'' of $s_0$. Such properties are crucial in the later proofs.
\end{remark}
% So by definition, there is a natural morphism $f:$ $F^{w}\rightarrow F$ by
% $f(s)=r(s)$.
$r$ defined above reveals the corresponding state of $s$ in the original model $\M$. It is not hard to show the following. 
\begin{proposition}\label{prop.bm}
The above $r$ (viewed as a relation) is a $wa^n$-bisimulation between $\M_{w}$ and $\M$. Actually  $r$ is a p-morphism (over $n$-models) from $\M_w$ to $\M$.
\end{proposition}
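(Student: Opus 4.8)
The plan is to verify directly that $r$ is a p-morphism of $n$-models, from which being a $wa^n$-bisimulation is immediate. Recall that a map $f$ from an $n$-model to an $n$-model is a \emph{p-morphism} if: (i) $V(x)=V'(f(x))$ for all $x$; (ii) whenever $Rx_0x_1\dots x_n$ we have $R'f(x_0)f(x_1)\dots f(x_n)$; and (iii) whenever $R'f(x_0)y_1\dots y_n$, there are $x_1,\dots,x_n$ with $Rx_0x_1\dots x_n$ and $f(x_i)=y_i$ for each $i\leq n$. Two of these are built into the construction of $\M_w$: clause (i) is exactly the stipulation $V'(s)=V(r(s))$, and clause (ii) is immediate because $R's_0s_1\dots s_n$ was defined to require $Rr(s_0)r(s_1)\dots r(s_n)$ as one of its conjuncts. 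So after recalling the definition I would dispense with (i) and (ii) in a line each.

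The substance is the lifting condition (iii). Fix $s_0\in W_w$ whose last block is $\langle\vec v_m,i_m\rangle$, so $r(s_0)=\vec v_m[i_m]$, and suppose $Rr(s_0)u_1\dots u_n$. Put $\vec u=\langle u_1,\dots,u_n\rangle$ and, for each $k\in[1,n]$, let $s_k$ be the sequence obtained by appending the block $\langle\vec u,k\rangle$ to $s_0$. I would check that $s_k\in W_w$: the only new constraint is $R(\vec v_m[i_m])\vec u$, which holds because $\vec v_m[i_m]=r(s_0)$ and $Rr(s_0)u_1\dots u_n$ by hypothesis. Moreover $r(s_k)=\vec u[k]=u_k$, and $R^bs_0s_k$ since $s_k$ extends $s_0$ by exactly one block. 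Hence $Rr(s_0)r(s_1)\dots r(s_n)$ holds (it is literally $Rr(s_0)u_1\dots u_n$) and $R^bs_0s_k$ holds for every $k\leq n$, so $R's_0s_1\dots s_n$, with $r(s_k)=u_k$ as required.

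Finally I would observe, once and for all, that any p-morphism $f$ of $n$-models, viewed as the relation $Z=\{(x,f(x))\}$, is a $wa^n$-bisimulation: clause inv is (i); for forth, given $Rx_0x_1\dots x_n$, take $y_i:=f(x_i)$, apply (ii), and note that $x_iZy_i$ supplies for each $y_i$ a matching $x_i$; for back, given $R'f(x_0)y_1\dots y_n$, apply (iii) to get $x_1,\dots,x_n$ with $Rx_0x_1\dots x_n$ and $f(x_i)=y_i$, so each $x_i$ is $Z$-related to some $y_j$ (namely $y_i$). Specialising to $f=r$ and using that $r(s_0)$ is the state of $s_0$ in $\M$ completes the proof.

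I expect the only delicate point to be bookkeeping, not mathematics: one must keep the indexing of the sequence-blocks straight and confirm that the appended blocks $\langle\vec u,k\rangle$ genuinely meet the membership condition of $W_w$ and that both conjuncts in the definition of $R'$ are witnessed. The forth/back asymmetry that makes $wa^n$-bisimulation subtle is already absorbed into the p-morphism formulation, so no combinatorial argument is needed here.
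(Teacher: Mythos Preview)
Your proof is correct. The paper does not actually give a proof of this proposition (it only remarks ``It is not hard to show the following''), so there is nothing to compare against; your direct verification of the three p-morphism clauses from the definitions of $W_w$, $R^b$, $R'$, and $r$, followed by the observation that any p-morphism of $n$-models is a $wa^n$-bisimulation, is exactly the routine argument the paper is gesturing at.
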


Now we have all the ingredients to prove the following characterization theorem. Note that the characterization works with or without the finite model constraints.
\begin{theorem}\label{thm312}
A first-order formula $\alpha(x)$ is invariant under $\bis^n$ (over  finite models) iff $\alpha(x)$ is equivalent to a $\WAL^n$ formula (over finite models). 
\end{theorem}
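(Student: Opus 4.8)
The $(\Leftarrow)$ direction is immediate: via the standard translation $ST$ every $\WAL^n$ formula is equivalent to a first-order formula $\alpha(x)$ over the signature with one $(n+1)$-ary relation symbol and the unary symbols for the proposition letters, and by Proposition~\ref{prop:bisim-invariance} such a formula is invariant under $\bis^n$, a fortiori over finite models. For $(\Rightarrow)$ the plan is to follow Otto's compactness-free route, adapted to $n$-models. Fix $\alpha(x)$ of quantifier rank $q$ invariant under $\bis^n$ over finite models, let $\Phi_0\subseteq\Phi$ be the finite set of proposition letters occurring in $\alpha$, and put $k:=2^{q}$. The goal is to show that $\alpha$ is already invariant under the finite approximation $\bis^n_{k}$ over finite models. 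Granting this, one finishes as usual: up to logical equivalence there are only finitely many $\WAL^n$ formulas over $\Phi_0$ of modal depth $\le k$, so every $\bis^n_k$-class of pointed $n$-models is defined by a \emph{$k$-characteristic formula} $\chi^k_{\M,w}\in\WAL^n$ (a routine induction on $k$, parallel to the Hennessy--Milner argument above, yields $\N,v\models\chi^k_{\M,w}$ iff $\N,v\bis^n_k\M,w$); picking finitely many representatives $(\M_i,w_i)$ of the $\bis^n_k$-classes on which $\alpha$ holds, the assumed $\bis^n_k$-invariance gives $\alpha\equiv\bigvee_i\chi^k_{\M_i,w_i}$ over finite models.

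To prove $\bis^n_k$-invariance, take finite $\M,w$ and $\M',w'$ with $\M,w\bis^n_k\M',w'$ and $\M,w\models\alpha$; we must derive $\M',w'\models\alpha$. Form the bounded $n$-ary unravelings $\M_w|_k$ and $\M'_{w'}|_k$ of Section~\ref{sec.bis}; these are finite because $\M,\M'$ are and each node has boundedly many successor tuples, and by Proposition~\ref{prop.bm} together with an easy ``truncation preserves $\bis^n_k$'' lemma we get $\M_w|_k\bis^n_k\M,w$ and $\M'_{w'}|_k\bis^n_k\M',w'$, hence $\M_w|_k\bis^n_k\M'_{w'}|_k$ at the roots. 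Next build a single finite $n$-model $\mathcal P$ out of $\M_w|_k$ and $\M'_{w'}|_k$ placed side by side (disjoint), grafting at each leaf $s$ (a node at level $k$ with $r(s)=v$) a fresh disjoint copy of $\M$ (resp.\ $\M'$) in which $s$ is identified with the copy of $v$; this restores full $wa^n$-bisimilarity below level $k$ while keeping $\mathcal P$ finite, so by Proposition~\ref{prop.bm} and Remark~\ref{rem.tree} the two roots $a,b$ satisfy $\mathcal P,a\bis^n\M,w$ and $\mathcal P,b\bis^n\M',w'$. Since $\alpha$ is invariant under $\bis^n$ over finite models, it now suffices to prove $\mathcal P,a\models\alpha\iff\mathcal P,b\models\alpha$.

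For this last step I would play the $q$-round Ehrenfeucht--Fra\"{\i}ss\'e game on $(\mathcal P,a)$ and $(\mathcal P,b)$ over the $(n+1)$-ary signature and exhibit a winning strategy for Duplicator, which gives $\mathcal P,a\equiv_q\mathcal P,b$ and hence, as $\mathrm{qr}(\alpha)=q$, the desired equivalence. The idea is that $a$ and $b$ are roots of \emph{tree-like} regions (the unravelings) of depth $k=2^q$ that are $\bis^n_k$-equivalent, whereas the grafted copies of $\M,\M'$ first appear at depth $k$ --- too deep for Spoiler to reach and then exploit within $q$ moves; Duplicator maintains, round by round, a shrinking family of $\bis^n_{k'}$-links along the trees together with a Gaifman-style ``same local neighborhood'' invariant to answer Spoiler's long jumps. \textbf{This Ehrenfeucht--Fra\"{\i}ss\'e analysis is the main obstacle}: the locality facts that are routine for Kripke frames with a binary relation must be re-derived for $(n+1)$-ary relations and, crucially, for the non-index-aligned \textbf{forth}/\textbf{back} clauses of $wa^n$-bisimulation (the indices $i,j$ need not coincide, and successor tuples are matched only pointwise, not coordinatewise), so path-tracking has to be replaced by tuple-tracking and the EF neighborhoods measured in the Gaifman graph of $R$. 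A secondary, essentially routine, point is verifying that the grafted unraveling $\mathcal P$ really is a $wa^n$-bisimilar companion of the originals; this is exactly what Proposition~\ref{prop.bm} and the ``successors live one level down'' property recorded in Remark~\ref{rem.tree} are for.

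Finally, since the bound $k=2^q$ depends only on $q$ and not on the models, a small amount of extra work (grafting $\N$ onto a $k$-bounded unraveling, when it stays finite) carries the conclusion to arbitrary $n$-models as well; alternatively, for the unrestricted case one can take the classical detour through $\omega$-saturated models (or ultrafilter extensions) together with the Hennessy--Milner theorem for $\WAL^n$ proved above, which is the route indicated in the footnote at the start of this section.
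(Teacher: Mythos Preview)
Your outline has a real gap in the Ehrenfeucht--Fra\"{\i}ss\'e step. You claim Duplicator wins the $q$-round game on $(\mathcal P,a)$ versus $(\mathcal P,b)$, where the two halves of $\mathcal P$ are (grafted) bounded unravelings of \emph{different} models $\M$ and $\M'$, linked only by $\bis^n_k$. But $wa^n$-bisimilarity is far too weak to guarantee $\equiv_q$ over the $(n{+}1)$-ary relational signature: take $n=2$, $\M$ with a single triple $R w u v$ and $\M'$ with a single triple $R' w' u' u'$, all points satisfying $p$. Then $\M,w\bis^2\M',w'$ (indeed fully, not just $k$-bisimilar), yet the quantifier-rank-$2$ formula $\exists y\exists z\,(Rxyz\wedge\neg Rxzy)$ separates them, and in the unravelings Spoiler wins in two rounds regardless of how large $k$ is or what you graft at depth $k$. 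The ``shrinking family of $\bis^n_{k'}$-links'' you propose supplies modal-type matches, not partial isomorphisms for the full first-order language with $R$; the non-aligned forth/back clauses you rightly flag as delicate are exactly what breaks the translation from bisimulation links to EF responses.

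The paper avoids this by proving a \emph{locality} lemma instead: for a single model $\M$ one shows $\M,w\models\alpha$ iff $\M_w|_l\models\alpha$ (with $l=2^q-1$), and the EF game there is played between two auxiliary structures $\M^\ast,\N^\ast$ that are \emph{both built from the same $\M$}. In those structures the bounded-unraveling parts are literally identical, so Duplicator's neighbourhood-tracking strategy maintains genuine partial isomorphisms; the $n$-ary subtlety is handled by the observation (Remark~\ref{require for distance 1}) that in the unraveling successors of an $R$-tuple all sit one level down, so the binary Gaifman-style distance behaves as in Otto's original argument. Only \emph{after} locality is established for $\M$ and for $\M'$ separately does one compare $\M_w|_l$ with $\M'_{w'}|_l$, and at that point the comparison is by full $wa^n$-bisimilarity of the depth-$l$ trees together with the assumed $\bis^n$-invariance of $\alpha$ --- no second EF game is needed. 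Reroute your argument through this locality step and the rest of your plan (characteristic formulas, the finite/infinite split) goes through.
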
 

Following the general strategy in \cite{otto2004elementary}, the only non-trivial part is to show that the FOL formula $\alpha(x)$ that is invariant under $wa^n$-bisimulation has some locality property w.r.t. its bounded unraveling $\M_w|_l$ for some $l$. Due to lack of space, we only show the following lemma and give a proof sketch here.
% in Appendix \ref{app.local}. 
For other relatively routine parts of the proof, see \cite{otto2004elementary}.
\begin{lemma}[locality]\label{lem.local}
An FOL formula $\alpha(x)$ is invariant under $\bis$ (over finite models) implies that for some $l \in \mathbb{N} $, for any $n$-model $\M,w$: $\M,w\Vdash\alpha(x)[w]$ iff $\M_w|_{l}\Vdash\alpha(x)[(\lr{\vec{w},1)}]$. 
\end{lemma}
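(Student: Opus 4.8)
The plan is to adapt the standard Otto-style argument for modal locality (as in \cite{otto2004elementary}) to the polyadic/diagonal setting, where the crucial twist is that the relevant notion of ``closeness'' in a model is governed by the tree-like layering of the unraveling rather than by the Gaifman graph of the $n+1$-ary relation directly. First I would fix the quantifier rank $q$ of $\alpha(x)$ and set $l := 2^q$ (or a similar bound coming from the Ehrenfeucht--Fra\"iss\'e analysis). The goal is to show that $\M, w \models \alpha[w]$ iff $\M_w|_l \models \alpha[(\lr{\vec w, 1})]$, and by Proposition~\ref{prop.bm} together with bisimulation invariance of $\alpha$ we already know $\M, w \models \alpha[w]$ iff $\M_w \models \alpha[(\lr{\vec w, 1})]$; so it suffices to prove $\M_w \models \alpha[(\lr{\vec w,1})]$ iff $\M_w|_l \models \alpha[(\lr{\vec w,1})]$, i.e.\ truncating the unraveling at depth $l$ preserves the truth value of $\alpha$ at the root.

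The key step is an Ehrenfeucht--Fra\"iss\'e argument showing that $(\M_w, (\lr{\vec w,1}))$ and $(\M_w|_l, (\lr{\vec w,1}))$ are $q$-equivalent in FOL with the $(n+1)$-ary relation symbol. For this I would exploit Remark~\ref{rem.tree}: $\M^b_w$ is a tree and every tuple in $R'$ consists of a node together with immediate children of that node (all at the next level), so the only way two nodes can be ``linked'' by a ground atom $R' s_0 \dots s_n$ is if $s_1, \dots, s_n$ sit one level below $s_0$. Consequently the FOL-distance (in the Gaifman sense) between two nodes at depth $d_1$ and $d_2$ is essentially $|d_1 - d_2|$ plus the tree-distance within a level, and crucially, everything below level $l$ in $\M_w$ is at Gaifman-distance $> l - (\text{current level})$ from the root's neighbourhood. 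Then Duplicator's strategy in the $q$-round game: maintain a partial isomorphism between the $2^{q-r}$-neighbourhoods of the chosen tuples after round $r$; whenever Spoiler picks an element deep enough (below level $l$ in $\M_w$, or at the frontier of $\M_w|_l$), Duplicator copies the local tree structure on the other side, which is available because the unraveling below any node is a tree whose shape depends only on the underlying states, and truncating only removes material that is too far from the root to be probed in the remaining rounds. I would package the ``copying'' using the fact, implicit in the unraveling construction, that the subtree of $\M_w$ hanging below a node with tag $\lr{\vec v, i}$ is isomorphic (as an $n$-model) to $\M_{r(s)}$ and hence only depends on $r(s) = \vec v[i]$.

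I expect the main obstacle to be setting up the right metric/locality notion so that Gaifman locality of FOL actually kicks in. Unlike the binary case, the $(n+1)$-ary relation could a priori connect a node to several children simultaneously and one must check that this does not shrink distances in a way that breaks the layering; the remedy is exactly Remark~\ref{rem.tree}, which guarantees that $R'$-tuples respect the levels, so the Gaifman graph of $\M_w$ is ``almost'' the tree $\M^b_w$ (each node connected only to its parent and children), and Gaifman-distance $\le$ something like $2\cdot(\text{tree-distance})$. A secondary subtlety is that $\M_w|_l$ is not itself an unraveling of anything in a clean way --- its frontier nodes have no $R'$-successors --- but this is harmless because a formula of quantifier rank $q$ starting at the root can only ``see'' $2^q$ levels down, so with $l$ chosen large enough relative to $q$ the frontier is invisible. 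Once locality is established, the rest of Theorem~\ref{thm312} follows the template of \cite{otto2004elementary}: collect the finitely many $\WAL^n$ formulas of modal depth $\le l$ up to equivalence, show $\alpha$ is equivalent to a disjunction of their ``characteristic'' formulas via a compactness/König-type argument, and note the finite-model version goes through because the bounded unraveling $\M_w|_l$ of a finite model is finite.
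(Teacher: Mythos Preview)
Your overall template---follow Otto's strategy for the polyadic setting and exploit the tree-layer structure of the unraveling---is the same as the paper's, but the specific reduction you propose has a real gap. You reduce first to the \emph{full} unraveling $\M_w$ via Proposition~\ref{prop.bm} and then claim that $(\M_w,\lr{\vec w,1})$ and $(\M_w|_l,\lr{\vec w,1})$ are $q$-equivalent in $\FOL$. Both steps fail. For the finite-model version, $\M_w$ is in general infinite, so invariance of $\alpha$ under $\bis^n$ \emph{over finite models} cannot be used to pass from $\M$ to $\M_w$. More seriously, the EF claim itself is false: take $n=1$ and $\M$ a single reflexive point; then $\M_w$ is an infinite ray while $\M_w|_l$ is a finite path, and the rank-$2$ sentence $\exists y\,\neg\exists z\,Ryz$ (``there is a dead end'') holds in $\M_w|_l$ but not in $\M_w$. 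The analogous rank-$(n+1)$ sentence works for arbitrary $n$. So no finite $l$ makes the two structures $q$-equivalent, and Duplicator simply does not have the ``fresh subtree'' you appeal to: the root is already pebbled as $b_0$, and every node of $\M_w|_l$ lies within distance $\le l$ of it.

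What the paper (following Otto) actually does is avoid this by first passing to \emph{bigger} models $\M^*$ and $\N^*$, each obtained by adding several disjoint copies of $\M_w|_l$ to the respective side; $(\M^*,w^*)\bis^n(\M,w)$ and $(\N^*,v^*)\bis^n(\M_w|_l,\lr{\vec w,1})$ because the extra components are unreachable, both stay finite when $\M$ is, and the $q$-round EF game is now winnable for Duplicator precisely because the spare copies supply genuinely far-away material to match Spoiler's far moves. The distance notion, the neighbourhood bookkeeping, and the use of Remark~\ref{rem.tree} that you outline are all correct and are exactly what the paper uses---but only after this disjoint-copies step, which is the missing idea in your plan.
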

Here we explain the most important ideas behind the proof. First of all, like in \cite{otto2004elementary}, we take $l=2^q-1$ where $q$ is the quantifier rank of $\alpha(x)$, and build two bigger models which are $wa^n$-bisimilar to $\M,w$ and $\M_w|_{l}$ respectively using our new unraveling notion. Then we show in the $q$-round EF game between the bigger $n$-models Duplicator has a winning strategy. To specify the strategy, which is essentially to let the duplicator to keep some ``safe zones'' for extensions of partial isomorphisms, we need to define the distance of points in $n$-models. Let the distance between $s$ and $s'$ (notation $d(s,s')$) be the length of the shortest (undirected) path between $s$ and $s'$ via a new relation binary $R^{c}$ s.t. $R^{c}xy$ iff $Rxy_{1}\ldots y_{n}$ and $y=y_{i}$ for some $i\in[1, n]$. We set  $d(s, s')=\omega$ if $s$ and $s'$ are not connected by any such path. It is easy to see that in the unraveling $\M_w$, $d(s, s')$ is exactly the distance in the usual sense between $s, s'$ in the tree $\M_w^b$. Surprisingly, the winning strategy looks exactly like the one in \cite{otto2004elementary} for binary models, this of course deserves some explanations below.

Another key point is that we need to define two ``neighborhoods'' of a node--a big one and a small one, as in the following key step in our proof:

We need to show that: After $m$ rounds ($0\leq m\leq q$), the following two hold:

(Let $(a_{i},b_{i})$ be the pair selected at $i$ round where each $a_{i}\in
\M^{\ast}$ and $b_{i}\in \N^{\ast}$, especially $a_{0}=w^{\ast}$ and
$b_{0}=v^{\ast}$.

Let $S(m)=\{a_{i}\mid i\leq m\}$, $N_{i}(m)$ be the neighborhood of $a_{i}$
within distance of $2^{q-m}-1$, and $N_{i}^{^{\prime}}(m)$ be the neighborhood
of $a_{i}$ within distance of $2^{q-(m+1)}$.)

(1). the selected points form a partial isomorphism $I$: $\M^{\ast}\rightarrow
\N^{\ast}$.

(2). if $m<q$ then there is a sequence $(I_{0},\ldots,I_{m})$ s.t. for each
$i\leq m$,

a). $I_{i}\supseteq I$ is a partial isomorphism with $Dom(I_{i})=N_{i}(m)\cup
S(m)$;

b). $\forall h,j\leq m\forall x\in N_{h}^{^{\prime}}(m)\cap N_{j}^{^{\prime}}(m)(I_{h}(x)=I_{j}(x))$.

In Otto's oringinal proof in \cite{otto2004elementary}, he omits the above part. But we think it's necessary to give such an explicit description here. 

\begin{remark}\label{require for distance 1}
It is not hard to show that under our distance notion, for each $x,y,z$ in the model, $d(x,z) \geq d(x,y)-d(y,z)$, i.e., $d(x,z)+d(z,y) \geq d(x,y)$ which is a more usual form of the  \textit{triangle inequality}. This justifies the new distance notion. To see why a similar strategy like the one in \cite{otto2004elementary} for binary models works, note that our unraveling $\M_w$ is essentially based on a \textit{tree} $\M_w^b$ by definition, and the $n$-ary relation over such a tree structure has very a special property: if $Rs_0\dots s_n$ then $s_1\dots s_n$ are immediate successors of $s_0$ in the binary unraveling as mentioned in Remark \ref{rem.tree}. This leads to the following crucial property we will use repeatedly: if we already established a partial isomorphism $I$ between $S$ and $N$ (w.r.t.\ also $n$-ary relations), and $x\not\in S$ is not directly connected to anything in $S$, and $y\not\in N$ is also not directly connected to anything in $N$ then $I\cup\{(x,y)\}$ extending $I$ is again a partial isomorphism. 
\end{remark}

% \begin{proof}
%  However, in a $n$-ary model we have to redefine the distance:  The idea is that Defender has to keep a safe matching zone of $2^{q-m}$ at round $m$. If Spoiler selects some point outside the alarm distance then Defender can respond with the ``same'' point in a new copy. Note that the winning strategy was first proposed in \cite{otto2004elementary} for the binary models in the standard modal logic setting. However, it also works in our $n$-ary models because the way we  The detailed proof is a bit complicated which can be found in the Appendix. 
% % \end{proof}
% % The following is the key part for proving the above lemma.
% % \begin{proposition}
% % $M^*,w^*$ $\equiv_{q}N^{\ast},v^{\ast}$.
% % \end{proposition}
% % \begin{proof}
% \end{proof}

Finally, the bound $l=2^{q}-1$ in the above proof, which we choose uniformly for every $n$, is actually not ``optimal", since for a larger $n$, we can have a lower bound. Especially, when $n > q$, even $l=1$, the Duplicator could have a winning strategy, since any bijection will be a partial isomorphism. So the distance we define here is not a appropriate one for us to find the minimal bound $l$. Here we conjecture that the bound should be the least integer $l$ s.t. $l \geq (2^{q}-1)/n$.

\section{Interpolation}
\label{sec.interp}
% By a standard strategy in \cite{hansen2009neighbourhood}, we know that the basic polyadic modal logics have the Craig Interpolation theorem. But we cannot use the same
% tool to prove interpolation theorem for \WAL, since the bisimulation notion for
% \WAL \ is not strong enough. Even though we have already known that \WAL\  is complete, which means $\PML$ is a conservative extension of \WAL, we cannot get the interpolation theorem directly, since we can only find a interpolation in $\PML$ but not in \WAL.

By a standard strategy in \cite{hansen2009neighbourhood}, we know that the basic polyadic modal logics ($\PML$) have the Craig Interpolation theorem. What's more, in \cite{santocanale2010uniform}, the authors proved that the minimal monotonic modal logic $M$ has Uniform Interpolation. Furtherly, we know that the basic modal logic $\SK$ also has Uniform Interpolation from \cite{andreka1995back} and \cite{andreka1998modal}. From the following three aspect we may conjecture that the basic $\WAL$ systems $\SK_{n}$ also has interpolation:
\begin{itemize}
    \item[1] $\WAL$ can be treated as a fragment of $\PML$.
    \item[2] $\SK_{n}$ is regarded as a general version of $\SK$, since $\SK$ is just $\SK_{1}$.
    \item[3] $\SK_{n}$ can be viewed as a special kind of monotonic modal logics.
    
\end{itemize}
But in fact no $\SK_{n}$ has the Craig Interpolation Property for $n \geq 2$. The first counterexample for interpolation we found is for $\SK_{3}$, which is relatively easier to understand and can be readily generalized to all $\SK_{n}$ for $n \geq 3$. Later we found a counterexample for $\SK_{2}$, which is slightly more complicated. Here we first give the two counter examples for $\SK_2$ and $\SK_3$ and then provide the general construction for $\SK_n$ ($n \ge 3$). But before we state the counterexamples, let us first clarify what do we mean by ``a counterexample'' of the Craig Interpolation Property for $\SK_n$.

\begin{lemma}
\label{lem:countercip}
Let $n$ be a non-zero natural number. If there are two pointed $n$-models $\M, w$ and $\N, v$ and two formulas $\phi$ and $\psi$ such that
\begin{enumerate}
    \item $\M, w\models \phi$ and $\N, v\models \psi$;
    \item $\SK_n \vdash \phi \to \lnot \psi$;
    \item letting $\Phi'$ be the set of all the propositional letters that appear both in $\phi$ and $\psi$, for any formula $\gamma$ in $\WAL$ such that only letters in $\Phi'$ appear, $\M, w \models \gamma$ iff $\N, v \models \gamma$;
\end{enumerate}
then $\SK_n$ lack the Craig Interpolation Property. 
\end{lemma}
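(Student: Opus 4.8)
The plan is to show that the three listed conditions together directly contradict the existence of a Craig interpolant, so the lemma follows by a short argument by contradiction. First I would unpack what Craig Interpolation for $\SK_n$ says: whenever $\SK_n \vdash \alpha \to \beta$, there is a formula $\theta$ whose propositional letters all occur both in $\alpha$ and in $\beta$, with $\SK_n \vdash \alpha \to \theta$ and $\SK_n \vdash \theta \to \beta$. Applying this to $\alpha := \phi$ and $\beta := \lnot\psi$ (legitimate by condition 2), we obtain an interpolant $\theta$ in the common vocabulary $\Phi'$ with $\SK_n \vdash \phi \to \theta$ and $\SK_n \vdash \theta \to \lnot\psi$.

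Next I would invoke soundness of $\SK_n$ with respect to the class of all $n$-models (this is part of the completeness result of \cite{apostoli1995solution,apostoli1997completeness} cited earlier, but soundness is the easy half and in any case I only need validity of the two derived implications on $n$-models). From $\M, w \models \phi$ (condition 1) and validity of $\phi \to \theta$ we get $\M, w \models \theta$. But $\theta$ is a $\WAL$ formula using only letters in $\Phi'$, so condition 3 gives $\N, v \models \theta$. Then validity of $\theta \to \lnot\psi$ yields $\N, v \models \lnot\psi$, i.e.\ $\N, v \not\models \psi$, contradicting $\N, v \models \psi$ from condition 1. Hence no such interpolant $\theta$ can exist, and $\SK_n$ lacks the Craig Interpolation Property.

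There is really no serious obstacle in this lemma itself — it is the standard "semantic counterexample template" for refuting interpolation, and the only things being used are soundness and the closure of the premises under the relevant implications. The one point that deserves a sentence of care is that condition 3 is stated for $\WAL$ formulas over the restricted signature $\Phi'$, and the interpolant produced by the (hypothetical) interpolation property is by definition such a formula, so the match is exact; I would make this explicit rather than leaving it implicit. The genuinely hard work, of course, is deferred: it lies in actually exhibiting, for each $n \ge 2$, the models $\M,w$, $\N,v$ and the formulas $\phi,\psi$ satisfying all three conditions — in particular verifying condition 3, the $\Phi'$-indistinguishability of the two pointed models, which is where a $wa^n$-bisimulation (or a $k$-$wa^n$-bisimulation up to the relevant modal depth) between the $\Phi'$-reducts of $\M,w$ and $\N,v$ would be constructed and checked. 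But that is the content of the subsequent constructions, not of this lemma.

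\begin{proof}
Suppose for contradiction that $\SK_n$ has the Craig Interpolation Property, and let $\M,w$, $\N,v$, $\phi$, $\psi$ be as in the statement. By condition~2, $\SK_n \vdash \phi \to \lnot\psi$, so interpolation yields a formula $\theta$ all of whose propositional letters occur in both $\phi$ and $\lnot\psi$ — hence all in $\Phi'$ — such that $\SK_n \vdash \phi \to \theta$ and $\SK_n \vdash \theta \to \lnot\psi$. Since $\SK_n$ is sound with respect to the class of all $n$-models, both implications are valid on all $n$-models. By condition~1, $\M,w \models \phi$, so $\M,w \models \theta$. As $\theta$ is a $\WAL$ formula using only letters in $\Phi'$, condition~3 gives $\N,v \models \theta$, whence $\N,v \models \lnot\psi$, i.e.\ $\N,v \not\models \psi$. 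This contradicts $\N,v \models \psi$ from condition~1. Therefore $\SK_n$ lacks the Craig Interpolation Property.
\end{proof}
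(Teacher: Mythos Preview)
Your proof is correct and follows essentially the same route as the paper's own argument: assume interpolation, obtain an interpolant in the common vocabulary, and use soundness together with condition~3 to derive $\N,v \models \lnot\psi$, contradicting condition~1. The only cosmetic difference is notation (you write $\theta$ where the paper writes $\gamma$).
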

\begin{proof}
Assume the antecedent and also that $\SK_n$ has the Craig Interpolation Property. Then since $\SK_n \vdash \phi \to \lnot \psi$, there is a interpolant $\gamma$ such that 
\begin{itemize}
    \item $\SK_n \vdash \phi \to \gamma$ and $\SK_n \vdash \gamma \to \lnot \psi$;
    \item only letters in $\Phi'$ appear in $\gamma$. 
\end{itemize}
Now since $\M, w \models \phi$ with $\M$ being an $n$-model and $\SK_n \vdash \phi \to \gamma$, by soundness, $\M, w \models \gamma$. Then $\N, v \models \gamma$ by the third bullet point in the antecedent. Then using $\SK_n \vdash \gamma \to \lnot \psi$ and soundness again, $\N, v \models \lnot \psi$, contradicting $\N, v \models \psi$. 
\end{proof}
Given this proposition, a pair of pointed $n$-models and a pair of formulas satisfying the antecedent constitute a counterexample of the Craig Interpolation property. Now we proceed to provide them for each $\SK_n$ with $n \ge 2$. 

% and we will give counterexamples for $\SK_{2}$ and $\SK_{3}$ here, from which one can easily find counterexamples for other $\SK_{i}$. The counterexamples are actually from what we lack in the bisimulation relation--which is not a bijection.

\begin{example} Consider the following two $2$-models where $\{\lr{w,w_1,w_2},\lr{w,w_3,w_4}\}$ is the ternary relation in the left model $\M_2$, and $\{\lr{v,v_1,v_2}\}$ is the ternary relation in the right model $\N_2$, where the valuations are as in the diagram. \vspace{-5pt}

$$\xymatrix@R-15pt@C+3pt{
 &w\ar@{-}[d]\ar[dl]\ar@{-}[dr] & &v\ar@{-}[d]\ar@{-}[rd] &\\
 \lr{w_1, w_2}:p,\neg q& w_3:p,q\ar@{-}[r] & w_4:\neg p,q &v_1: p,\neg r\ar@{-}[r] &v_2: p,r \\
} \vspace{-10pt}$$

\bigskip
Then set $\varphi_2 = \Box (\neg p \vee \neg q) \wedge \Diamond q$ and $\psi_2 = \Box (p \wedge r) \wedge \Box(p \wedge \neg r)$. It is easy to see that $\M_2, w \models \phi_2$ and $\N_2, v \models \psi_2$.  To see that $\SK_2\vdash \phi_2 \to \neg \psi_2$, consider the following derivation, where to make long Boolean combinations readable, we write negation of propositional letters as overline, omit $\land$ between purely Boolean formulas and replace $\lor$ with $|$.
\begin{itemize}
    \item $\vdash_2 \Box (\Bar{p} | \Bar{q}) \wedge \Box rp \wedge \Box \Bar{r}p \rightarrow \Box (((\Bar{p} | \Bar{q})rp) | ((\Bar{p} | \Bar{q})\Bar{r}p)| rp\Bar{r}p)$ \hfill $\AxWK_2$
    \item $\vdash_2 \Box (\Bar{p} | \Bar{q}) \wedge \Box rp \wedge \Box\Bar{r}p \rightarrow \Box p\Bar{q}$ \hfill PL,RE
    \item $\vdash_2 \phi_2 \wedge \psi_2 \rightarrow \Box p\Bar{q} \wedge \Diamond q$ \hfill PL
    \item $\vdash_2 \phi_2 \wedge \psi_2 \rightarrow \Box \Bar{q} \wedge \lnot \Box \Bar{q}$ \hfill PL, $\MonoK$
    \item $\vdash_2 \phi_2 \rightarrow \neg \psi_2$ \hfill PL
\end{itemize}
Here PL means propositional reasoning. Hence we are done with the first two points for this pair of models and formulas to be a counterexamples. For the last point, note that $Z=\{\lr{w,v}, \lr{w_1, v_1}, \lr{w_2, v_2}, \lr{w_3, v_1},\lr{w_3, v_2}\}$ is a $wa^2$-bisimulation when $\Phi = \{p\}$. Hence by Proposition \ref{prop:bisim-invariance}, for any formula $\gamma$ with $p$ the only propositional letter, $\M_2, w \models \gamma$ iff $\N_2, v \models \gamma$. But quite obviously, $p$ is the only common propositional letters in $\phi_2$ and $\psi_2$. Clearly now $\M, w$, $\N, v$, $\phi_2$, and $\psi_2$ form a counterexample to the Craig Interpolation Property for $\SK_2$. 
%\begin{align*}
%\M_2,w &\models \Box (\neg p \vee \neg q) \wedge \Diamond q(= \phi_2); \\
%\N_2,v &\models \Box (r \wedge p) \wedge \Box(\neg r \wedge p)(= \psi_2).
%\end{align*}
%\begin{align*}
%\M_2,w &\models \Diamond (p \wedge \neg q) \wedge \Box p\wedge \Diamond q \wedge \Box (\neg p \vee \neg q) (= \phi_2); \\
%\N_2,v &\models \Box (r \wedge p) \wedge \Box(\neg r \wedge p)\wedge \Box \neg p (= \psi_2).
%\end{align*}
\end{example}

\begin{example} Consider the following two $3$-models where $\{\lr{w,w_1,w_2,w_3}\}$ is the relation in $\M_3$ and $\{\lr{v,v_1,v_2,v_3}\}$ is the relation in $\N_3$. \vspace{-5pt}
$$\xymatrix@C+20pt@R-15pt{
&w_1:p,\neg q\ar@{-}[d] &     & v_1:\neg p,r\ar@{-}[d] \\
\M_3:w\ar@{-}[ru]\ar@{-}[rd]\ar@{-}[r]&w_2:p,q\ar@{-}[d] &  \N_3:v\ar@{-}[ru]\ar@{-}[rd]\ar@{-}[r]  & v_2:\neg p,\neg r \\
 & w_3:\neg p,q\ar@{-}[u]  &   & v_3:p,r\ar@{-}[u] & 
}$$
Then set $\phi_3= \Box p\Bar{q} \wedge \Box pq \wedge \Diamond(p|\Bar{p})$, $\psi_3= \Box \Bar{p}r \wedge \Box \Bar{p}\Bar{r} \wedge \Diamond(p|\Bar{p})$. Clearly $\M_3, w \models \phi_3$ and $\N_3, v \models \psi_3$. Further, $\SK_2 \vdash \phi_3 \to \lnot \psi_3$ since we have the following derivation. 
\begin{itemize}
    \item $\vdash_3 \Box p\Bar{q} \wedge \Box pq \wedge \Box\Bar{p}r \wedge \Box\Bar{p}\Bar{r} \rightarrow \Box (p\Bar{q}pq |  p\Bar{q}\Bar{p}r | p \Bar{q}\Bar{p}\Bar{r} | pq\Bar{p}r | pq\Bar{p}\Bar{r} | \Bar{p}r\Bar{p}\Bar{r}            )$ \hfill $\AxWK_3$
    \item $\vdash_3 \Box p\Bar{q} \wedge \Box pq \wedge \Box\Bar{p}r \wedge \Box\Bar{p}\Bar{r} \rightarrow \Box p\Bar{p}$ \hfill PL,
    \item $\vdash_3 \phi_3 \wedge \psi_3 \rightarrow \Box p\Bar{p} \wedge \Diamond (p|\Bar{p}) $ \hfill  PL, $\MonoK$
    \item $\vdash_3 \phi_3 \rightarrow \neg \psi_3$ \hfill PL
\end{itemize}
Finally, note that $Z=\{\lr{w,v}, \lr{w_1, v_3}, \lr{w_2, v_3}, \lr{w_3, v_1},\lr{w_3, v_2}\}$ is a $wa^3$-bisimulation if $\Phi = \{p\}$. 
\end{example}

%\begin{proposition} \label{lcit}
%$\SK_{2}$ and $\SK_{3}$ lack the Craig Interpolation Theorem.
%\end{proposition}
%\begin{proof}
%We will see that the above two are counterexamples for CIP: $\phi_i \rightarrow \neg \psi_i$ do %not have any interpolant in $\AxWK_i$.
%Since the only common proposition letter is $p$, we assume for contradiction that $\theta_i(p)$ %is a interpolant for $\phi_i \rightarrow \neg \psi_i$. Then We have $\vdash_i \phi_i \rightarrow %\theta_i(p)$ and $\vdash_i \theta_i(p) \rightarrow \neg \psi_3$. By the above two examples we %know that $\M_i,w\models\theta_i(p)$, which also means $\N_i,v\models\theta_i(p)$ by %$p$-bisimulation. But it follows that $\N_i,v\models \neg \psi_i$, a contradiction. 
%\end{proof}

The above example can be naturally generalized for each $\SK_n$ with $n > 3$. Let $m$ be the least natural number s.t. $2^{m} \geq n-1$ and pick $m$ many distinct propositional letters $r_1, \ldots, r_m$ from $\Phi$. Then for each $i$ from $1$ to $n-1$, we can associate a distinct conjunction of literals $\rho_i$ using $r_j$'s so that $\rho_i \land \rho_{i'}$ are incompatible for each $i \not= i'$. Then we can state the general counterexample.  

\begin{example} Consider the following two $n$-models where $\{\lr{w,w_1,...,w_n}\}$ is the relation in $\M_n$, and $\{\lr{v,v_1,...,v_n}\}$ is the relation in $\N_n$. \vspace{-5pt}
$$\xymatrix@C+20pt@R-15pt{
&w_1:p,\neg q\ar@{-}[d] &     & v_1: p\ar@{-}[d] \\
 & w_2:p,q\ar@{-}[d] &  & v_2:\neg p,\rho_{1}\ar@{-}[d] \\
\M_n:w\ar@{-}[ru]\ar@{-}[rd]\ar@{-}[r]\ar@{-}[rdd]\ar@{-}[ruu]& w_3:\neg p\ar@{-}[d]  &  \N_n:v\ar@{-}[ru]\ar@{-}[rd]\ar@{-}[r]\ar@{-}[rdd]\ar@{-}[ruu]  & 
v_3:\neg p,\rho_{2}\ar@{-}[d] \\
  & \vdots &  & \vdots \\
 & w_n:\neg p \ar@{-}[u]  &   & v_n:\neg p,\rho_{n-1}\ar@{-}[u] &
}$$
Set $\phi_n =  \Box (p \wedge \neg q) \wedge \Box (p \wedge q) \wedge \Diamond \top$ and $\psi_n = \bigwedge_{i = 1}^{n-1}\Box (\neg p \wedge \rho_{i}) \wedge \Diamond \top$. Clearly $\M_n, w \models \phi_n$ and $\N_n, v \models \psi_n$. It is also easy to see that by $\mathtt{K}_n$, we can derive $(\Box(p \land \lnot q) \land \Box(p \land q) \land \bigwedge_{i = 1}^{n-1}\Box(\lnot p \land \rho_i)) \to \Box \bot$. With this we can then easily derive $\phi_n \to \lnot \psi_n$ in $\SK_n$. Finally, note that $p$ is the only common propositional letter in $\phi_n$ and $\psi_n$ and that $Z=\{\lr{w,v}, \lr{w_1, v_1}, \lr{w_2, v_1}\} \cup \{w_3,...,w_n\} \times \{v_2,...,v_n\} $ is a $wa^n$-bisimulation when $\Phi = \{p\}$. 
\end{example}

With the examples and Lemma \ref{lem:countercip}, the main theorem of this section follows. 
\begin{theorem}
For any $n \ge 2$, $\SK_n$ does not have the Craig Interpolation Property. 
\end{theorem}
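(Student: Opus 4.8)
The plan is to reduce the theorem to Lemma~\ref{lem:countercip}: for each $n\ge 2$ it suffices to exhibit two pointed $n$-models and two formulas satisfying its three hypotheses. For $n=2$ I would take $(\M_2,w)$, $(\N_2,v)$ together with $(\phi_2,\psi_2)$ from the first example; for $n=3$ the pair $(\M_3,w)$, $(\N_3,v)$ with $(\phi_3,\psi_3)$; and for $n\ge 3$ (covering in particular all $n\ge 4$) the parametrized construction $(\M_n,w)$, $(\N_n,v)$ with $(\phi_n,\psi_n)$, where the pairwise incompatible literal-conjunctions $\rho_1,\dots,\rho_{n-1}$ are built from $m=\lceil\log_2(n-1)\rceil$ auxiliary propositional letters $r_1,\dots,r_m$ disjoint from $\{p,q\}$.

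For each pair, three things must be checked. First, \emph{satisfaction}: $\M_n,w\models\phi_n$ and $\N_n,v\models\psi_n$. This is a direct computation from the $n$-semantics — each of $w,v$ has exactly the listed successor tuple(s), on each such tuple some coordinate verifies the relevant box-disjunct, and the $\Diamond\top$ conjunct holds because there is at least one successor tuple. Second, \emph{inconsistency over $\SK_n$}: $\SK_n\vdash\phi_n\to\lnot\psi_n$. The engine is $\AxWK_n$ applied to the $n+1$ box-conjuncts $\Box(p\land\lnot q)$, $\Box(p\land q)$, $\Box(\lnot p\land\rho_1),\dots,\Box(\lnot p\land\rho_{n-1})$: the pigeonhole disjunction $\bigvee_{0\le i<j\le n}(\chi_i\land\chi_j)$ over these $n+1$ mutually incompatible conjunctions $\chi_i$ is a propositional falsehood, so $\MonoK$ and propositional logic give $\Box\bot$, which together with the $\Diamond\top$ conjunct yields a contradiction (using that $\Box\bot\to\lnot\Diamond\top$ is $\SK_n$-derivable via $\MonoK$). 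The real content of this step is that the $\chi_i$ are pairwise incompatible, which is exactly what the valuations of the $w_i$'s and the choice of the $\rho_i$'s are designed to guarantee. Third, \emph{indistinguishability on shared vocabulary}: the only propositional letter common to $\phi_n$ and $\psi_n$ is $p$, and the displayed relation $Z$ — in the general case $\{\lr{w,v},\lr{w_1,v_1},\lr{w_2,v_1}\}\cup\{w_3,\dots,w_n\}\times\{v_2,\dots,v_n\}$ — is a $wa^n$-bisimulation when $\Phi=\{p\}$; one verifies \textbf{inv}, \textbf{forth} (the unique successor tuple of $w$ is matched by that of $v$, with every $v_j$ covered by some $w_i$), and \textbf{back} symmetrically, and then Proposition~\ref{prop:bisim-invariance} gives $\M_n,w\models\gamma\iff\N_n,v\models\gamma$ for every $\WAL$-formula $\gamma$ in the letter $p$ alone. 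Lemma~\ref{lem:countercip} then delivers the failure of Craig Interpolation for $\SK_n$, uniformly in $n\ge 2$.

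The step I expect to be the main obstacle is the inconsistency derivation, and specifically getting the pigeonhole bookkeeping right: one needs precisely $n+1$ box-conjuncts so that $\AxWK_n$ applies, the pairwise conjunctions must genuinely collapse to $\bot$ (which is what forces the careful choice of the $\rho_i$'s, hence the logarithmic number of auxiliary letters), and none of the $r_j$ may leak into the shared vocabulary or condition~(iii) would break. Checking \textbf{back} for $Z$ in the general construction — every $w_i$ with $3\le i\le n$ must be matched by some $v_j$ with $2\le j\le n$, and vice versa — also relies essentially on the product block of $Z$, so the bisimulation is necessarily coarser than any coordinatewise matching; this is precisely the subtlety flagged in the remark following the definition of $wa^n$-bisimulation.
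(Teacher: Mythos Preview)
Your proposal is correct and follows essentially the same route as the paper: invoke Lemma~\ref{lem:countercip} with exactly the displayed examples --- $(\M_2,\N_2,\phi_2,\psi_2)$ for $n=2$ and the parametrized $(\M_n,\N_n,\phi_n,\psi_n)$ for $n\ge 3$ --- and verify satisfaction, the $\AxWK_n$-based derivation of $\phi_n\to\lnot\psi_n$, and the $\{p\}$-restricted $wa^n$-bisimulation via Proposition~\ref{prop:bisim-invariance}. The bookkeeping you flag (exactly $n{+}1$ pairwise incompatible box-arguments, the $r_j$'s kept disjoint from $\{p,q\}$) is precisely what the paper's construction is set up to deliver.
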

\begin{remark}
Note that the Lemma \ref{lem:countercip} uses only the soundness of the logics. Hence for any extension of $\SK_n$ that is sound on $\M_n$ and $\N_n$, it still lacks the Craig Interpolation Property. For example, we may extend $\SK_n$ with $\mathtt{4}$ and our examples still work since $\mathtt{4}$ is valid on the underlying frames. 
%On the other hand, note how the example we provided requires a certain amount of distinct propositional letters. It is still open whether the Craig Interpolation Property can be found in $\SK_n$ when we restrict the cardinality of $\Phi$ to a suitably small finite number. 
\end{remark}

\section{Conclusion}\label{sec.con}

In this paper, we proved two results about $\WAL$: firstly, $\WAL$ have a van Benthem-Rosen characterization, and secondly, $\WAL$ do not have Craig Interpolation Property. We conclude with two potentially promising line of further investigation.
 
First, the main part of the completeness proof of $\SK_n$ over $n$-models is to solve some combinatorial puzzle \cite{nicholson2000revisiting}. Due to the semantics of $\WAL$ there is a natural link between combinatorics and $\WAL$, as also shown in the use of graph coloring problem in \cite{apostoli1995solution}. As future work, we would like to explore the possibility of using $\WAL$ to express interesting combinatorial properties in graph theory.

Second, even though we proved that that $\WAL$ do not have Craig Interpolation Theorem, it doesn't mean that the same must be the case with further constraints (stronger logics). For instance, the counterexample in our paper cannot show that $\SK_n \oplus \mathtt{T}$ lack CIT since the logic is not sound on the frames of the models we provided. What remains to be done then is to chart the map of CIP among the logics extending $\SK_n$'s and look for more general methods.

\bibliographystyle{splncs04}
\bibliography{myref}

\end{document}